\documentclass[runningheads]{llncs}
\usepackage{amsmath}
\usepackage{hyperref}
\usepackage[capitalise]{cleveref}
\usepackage[T1]{fontenc}
\usepackage{graphicx}
\usepackage{color}
\usepackage{tikz}
\usepackage{thm-restate}

\usetikzlibrary{matrix,backgrounds,positioning,arrows, decorations.pathreplacing}

\usepackage{todonotes}

\urlstyle{rm}
\newcommand*{\rlbwt}{\ensuremath{r}}
\newcommand*{\rlbbwt}{\ensuremath{r_B}}

\newcommand*{\BBWT}{\ensuremath{\mathsf{BBWT}}}
\newcommand*{\BWT}{\ensuremath{\mathsf{BWT}}}
\newcommand*{\CSA}{\ensuremath{\mathsf{CSA}}}

\newcommand{\rot}{\mathit{rot}}

\newcommand{\ignore}[1]{}
\begin{document}
\title{Bijective BWT based Compression Schemes}
\author{
    Golnaz~Badkobeh\inst{1}\orcidID{0000-0001-5550-7149}
    \and
    Hideo Bannai\inst{2}\orcidID{0000-0002-6856-5185}
    \and
    Dominik K\"oppl\inst{3}\orcidID{0000-0002-8721-4444}
}
\authorrunning{G. Badkobeh et al.}
\institute{
    City, University of London, London, UK,\\
    \email{Golnaz.Badkobeh@city.ac.uk}\\
    \and
    Tokyo Medical and Dental University, Tokyo, Japan,\\
    \email{hdbn.dsc@tmd.ac.jp}\\
    \and
    University of Yamanashi, Kofu, Japan,\\
    \email{dkppl@yamanashi.ac.jp}
}
\maketitle              \begin{abstract}
    We investigate properties of the bijective Burrows-Wheeler transform (BBWT).        		We show that for any string $w$,
    a bidirectional macro scheme of size $O(\rlbbwt{})$ can be induced from the BBWT
    of $w$,
    where $\rlbbwt{}$
    is the number of maximal character runs in the BBWT.
    We also show that
    $\rlbbwt{} = O(z\log^2 n)$, where
    $n$ is the length of $w$ and
    $z$ is the number of Lempel-Ziv 77 factors of $w$.
    Then, we show a separation between BBWT and BWT by a family of strings with $\rlbbwt{} = \Omega(\log n)$ but having only $\rlbwt=2$ maximal character runs in the standard Burrows--Wheeler transform (BWT).
    However, we observe that the smallest $\rlbbwt{}$ among all
    cyclic rotations of $w$ is always at most $\rlbwt{}$.
    While an $o(n^2)$ algorithm for computing an optimal rotation giving the smallest $\rlbbwt$ is still open,
    we show how to
    compute the Lyndon factorizations -- a component for computing BBWT --
    of all cyclic rotations in $O(n)$ time.
Furthermore, we conjecture that we can transform two strings having the same Parikh vector to each other by BBWT and rotation operations, and prove this conjecture for the case of binary alphabets and permutations.

\keywords{
        Repetitiveness measure
        \and
        Burrows--Wheeler Transform
    }
\end{abstract}
\section{Introduction}
The Burrows--Wheeler transform (BWT)~\cite{burrows94bwt} has seen numerous
applications in data compression and text indexing, and is used heavily by various tools in the field of bioinformatics.
For any string $w$, $\BWT(w)$ is defined as the string obtained
by concatenating the last characters of all
cyclic rotations of $w$, in the lexicographic order of the cyclic rotations.
$\BWT$ is not injective,
as two strings are transformed to the same string if they are cyclic rotations of each other.
Also, $\BWT$ is not surjective
since $\BWT$ preserves the string length and by the pigeonhole principle,
there exist strings that are not in the image of $\BWT$.

For a string $x$, the inverse BWT transform is induced from the
LF-mapping function $\psi_x(i)$
which maps position $i$ in $x$
to its rank among all positions ordered by $(x[i],i)$,
i.e. $\psi_x(i) = |\{ j \in [1,|x|] \mid x[j] < x[i]| + |\{ j \in[1,i] \mid x[j] = x[i]\}|$.
For a primitive string $w$,
the standard BWT always constructs a string $x = \BWT(w)$ such that
$\psi_x$ forms a single cycle (i.e., $\forall i, \exists j$ s.t. $\psi_x^j(i)=1$),
and $x[\psi_x^{|w|-1}(i)]\cdots x[\psi_x^{0}(i)]$ is a cyclic rotation of $w$.
As an example,
for $w[1..6] = \mathtt{banana}$ and $x = \BWT(w) = \mathtt{nnbaaa}$,
$\psi_x(1) = 5,
    \psi_x(2) = 6,
    \psi_x(3) = 4,
    \psi_x(4) = 1,
    \psi_x(5) = 2,
    \psi_x(6) = 3$, and thus $\psi_x^{6}(i) = i$ for all $i \in [1,6]$.

In general, $\psi_x$ can form several cycles (e.g., when $x$ is not in the image of $\BWT$),
and it is more natural to view the inverse BWT transform
as a mapping from a string to a multiset of primitive cyclic strings.
The bijective BWT (BBWT)~\cite{kufleitner09bwt,DBLP:journals/corr/abs-1201-3077,mantaci07ebwt} exploits this to define
a bijection on strings.
By selecting the lexicographically smallest rotation of each cyclic string
and concatenating them in non-increasing lexicographic order,
this mapping becomes a bijection that maps a string to another string.
The forward transform $\BBWT(w)$ can then be defined as a transform that
first computes the Lyndon factorization~\cite{chen58lyndon} of $w$,
and then taking the last symbol of
all cyclic rotations of all the Lyndon factors, sorted in $\omega$-order $\prec_\omega$,
which is an order defined, when $x,y$ are primitive, as $x \prec_\omega y \iff x^\infty \prec y^\infty$, and $\prec$ denotes the standard lexicographic order.

It is known that the BBWT can be computed in linear time~\cite{BANNAI2024105153,DBLP:conf/spire/BoucherCL0S21a}.
It can also be used as an index similar to the BWT~\cite{bannai19bbwt,BANNAI2024105153},
or as an index for a set of circular strings~\cite{BOUCHER2024105155}.
While the size $r$ of the run-length compressed BWT (RLBWT) has been a focus of study in various contexts
and is known to be small for highly-repetitive texts~\cite{navarro21indexing1},
the size $r_B$ of the run-length compressed BBWT (RLBBWT) has not yet been studied rigorously.
Biagi et al.~\cite{DBLP:conf/ictcs/0002CLR23} study the sensitivity~\cite{DBLP:journals/iandc/AkagiFI23} of $r_B$ with respect to the {\it reverse}
operation, and present an infinite family of strings such that $r_B$ of a string and its reverse can differ by a factor of $\Omega(\log n)$.

In this paper, we investigate properties of
BBWT and $\rlbbwt{}$.
In detail, we show that we can induce a bidirectional macro scheme (BMS)~\cite{DBLP:journals/jacm/StorerS82} of size $O(\rlbbwt{})$ for $w$,
from the RLBBWT of $w$~(\cref{lemInduceBMS}).
We further show that $\rlbbwt{} = O(z\log^2 n)$ where $z$ is the number of Lempel-Ziv 77 (LZ77) factors of $w$ (\cref{thmUpperBoundZ}).
Then, we show a separation between $r_B$ and $r$,
by a family of strings with $\rlbbwt{} = \Omega(\log n)$ but $r = 2$~(\cref{thmLognMultBWT}).

Noticing that $\rlbbwt{} = \rlbwt{}$ for Lyndon words,
the smallest $\rlbbwt{}$ among all cyclic rotations of
$w$ is always at most $\rlbwt$.
While we do not yet know how to compute, in subquadratic time, such an optimal rotation that gives the smallest $r_B$, we
show that we can compute the Lyndon factorizations of all rotations of $w$ in linear time (\cref{thmCyclicLyndonFact}).

Finally, we conjecture that two strings having the same Parikh vector can be transformed to each other by BBWT and rotation operations (Conjecture~\ref{conjecture:reachability}); we prove this conjecture for special cases (\cref{cor:reachable}).

\section{Preliminaries}
Let $\Sigma$ be a set of symbols referred to a the {\em alphabet},
and $\Sigma^*$ the set of strings over $\Sigma$.
For a string $x\in\Sigma^*$, $|x|$ denotes $x$'s length.
The empty string (the string of length $0$) is denoted by $\varepsilon$.
For integer $i \in [1,|x|]$, $x[i]$ is the $i$th symbol of $x$,
and for integer $j \in[i,|x|]$, $x[i..j] = x[i]\cdots x[j]$.
For convenience, let $x[i..j] = \varepsilon$ if $i > j$.
Let $x = x^1$, and for integer $k\geq 2$, $x^k = x x^{k-1}$
A string is {\em primitive},
if it cannot be represented as $x^k$ for some string $x$ and integer $k\geq 2$.

Let $\rot(x) = x[|x|]x[1..|x|-1]$.
A string $y$ is a {\em cyclic rotation} (or simply a {\em   rotation}) of $x$
if there exists $i$ such that $y = \rot^i(x)$,
where
$\rot^1(x)=\rot(x)$, and for integer $k \geq 2$,
$\rot^k(x) = \rot^{k-1}(\rot(x))$.

Given a total order $\prec$ on $\Sigma$,
the lexicographic order (also denoted by $\prec$)
induced by $\prec$ is a total order on $\Sigma^*$ such that
$x\prec y$ if and only if $x$ is a prefix of~$y$, or, $x[i] \prec y[i]$ where $i = \min \{ k \geq 1 \mid x[k]\neq y[k] \}$.
A string $w$ is a {\em Lyndon word}, if it is lexicographically smaller than all of its proper suffixes~\cite{lyndon54}.
Lyndon words must therefore be primitive.
Also, any string $w$ can be partitioned into a
unique sequence of lexicographically non-increasing Lyndon words, called the {\em Lyndon factorization}~\cite{chen58lyndon} of $w$,
i.e., $w = f_1^{k_1}\cdots f_{\ell(w)}^{k_{\ell(w)}}$ where each $f_i~(1\leq i \leq \ell(w))$ is a Lyndon word, and $f_{i} \succ f_{i+1}$ for all $1 \leq i < \ell(w)$.
We call $f_i^{k_i}$ the $i$-th \emph{Lyndon necklace} of $w$.
The $\omega$-order $\prec_\omega$ is a total order over primitive strings,
defined as: $x\prec_\omega y$ if and only if $x^\infty\prec y^\infty$.
\footnote{Mantaci et al.~\cite{mantaci07ebwt} define the $\omega$-order as a total order over arbitrary strings (including non-primitive strings), but as it is not relevant in our presentation, we omit this for simplicity.}

Given a string $w$,
the Burrows--Wheeler transform $\BWT(w)$ is a string obtained by
concatenating the last symbol of all cyclic rotations of $w$, in lexicographic order.
The bijective BWT $\BBWT(w)$ is a string obtained by concatenating the last symbol
of all cyclic rotations of all Lyndon factors in the Lyndon factorization of $w$,
in $\omega$-order.
The number of maximal same-character runs in $\BWT(w)$ and $\BBWT(w)$ will be denoted by
$\rlbwt{}(w)$, and $\rlbbwt{}(w)$ respectively.
Although $\rlbwt{}(w)$, $\rlbbwt{}(w)$ and $\ell(w)$ are functions on strings to non-negative integers, we will omit writing the considered string and just write $\rlbwt{},\rlbbwt{},\ell$, if the context is clear.

\section{Properties of $\rlbbwt{}$}

We here analyze $\rlbbwt{}$ as a repetitiveness measure for a string~$w$.
We first confirm that $\rlbbwt{}$ corresponds to the size of a \emph{bidirectional macro scheme (BMS)},
and is a repetitiveness measure for a form of dictionary compression, as is RLBWT.
A BMS~\cite{DBLP:journals/jacm/StorerS82},
the most expressive form of dictionary compression,
partitions $w$ into phrases, such that each phrase
of length at least 2 can be represented as a reference to another substring of~$w$.
The referencing of the phrases induces a referencing forest over the positions:
any position in a phrase of length at least 2 references another position in~$w$,
such that all positions in the same phrase have the same offset and
thus adjacent positions point to adjacent positions,
and there are no cycles.

\begin{lemma}\label{lemInduceBMS}
    There exists a BMS of size $O(\rlbbwt(w))$ that represents the string~$w$.
\end{lemma}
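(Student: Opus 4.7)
The plan is to adapt the standard derivation of a bidirectional macro scheme (BMS) of size $O(\rlbwt)$ from the run-length $\BWT$, substituting the $\BWT$ LF-mapping by the analogous permutation $\psi$ on $\BBWT$-positions. Write $p(i)$ for the position in $w$ corresponding to $\BBWT$-position~$i$, so that $\BBWT(w)[i]=w[p(i)]$; then $\psi(i)$ is the $\BBWT$-position of the cyclic predecessor of $p(i)$ inside its Lyndon necklace. The construction rests on the structural observation that, for any maximal run $[a,b]$ of a character $c$ in $\BBWT(w)$, the set $\{\psi(a),\dots,\psi(b)\}$ is again a contiguous interval of $\BBWT$-positions. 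This parallels the standard $\BWT$ property: inside the run, positions are sorted by the $\omega$-order of their underlying rotations ending with $c$, and this order is preserved after $\psi$ shifts the $c$ to the front to give rotations starting with $c$; the rotations starting with $c$ themselves occupy a contiguous $\omega$-sorted block. Unlike for $\BWT$, where $\psi$ is a single cycle, here $\psi$ decomposes into $\ell$ cycles, one per Lyndon necklace.

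Using this, I would build the BMS by partitioning $w$ into phrases such that each phrase corresponds to a maximal consecutive sub-block of a single $\BBWT$-run whose $\psi$-image is also consecutive. Each phrase of length greater than one would reference the substring obtained by applying $\psi^{-1}$, giving a source of matching length and aligned offsets; within such a source, adjacent positions of $w$ map to adjacent $\BBWT$-positions, so the reference is coherent. Phrase boundaries in $w$ occur only where consecutive $\BBWT$-positions (under $\psi^{-1}$) cross a $\BBWT$-run boundary, so the phrase count is bounded by the number of run boundaries, i.e.\ $O(\rlbbwt(w))$. To guarantee acyclicity of the reference graph one length-one base-case literal must be added per cycle of $\psi$; since each Lyndon necklace contributes at least one run in $\BBWT(w)$ (so $\ell\le\rlbbwt(w)$), these $\ell$ extra literals do not change the asymptotic count.

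The main obstacle will be verifying the contiguity claim about $\psi([a,b])$ under $\BBWT$: because rotations from different Lyndon necklaces interleave in $\BBWT(w)$ in $\omega$-order, one has to show that the $\omega$-order among rotations ending with $c$ coincides with the $\omega$-order among the corresponding rotations starting with $c$, even across different necklaces, and that no rotation \emph{not} starting with $c$ is wedged in between. A secondary bookkeeping step is to confirm that when a phrase of $w$ crosses a Lyndon-necklace boundary the induced reference still points inside a single necklace's cycle of $\psi$; together with the $\ell$ designated base-case literals, this is what ensures the resulting reference forest is acyclic and thus a valid BMS.
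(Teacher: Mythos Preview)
Your high-level approach is the same as the paper's: both adapt Navarro et~al.'s $O(\rlbwt)$ BMS construction by letting every text position $p(i)$ that is not a run head in $\BBWT(w)$ reference the text position $p(i-1)$, and both rely on the LF-contiguity property you isolate (within a run, $\psi$ is an order-preserving bijection onto an interval). So the plan is sound.

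There is, however, a real gap in your phrase-count argument. You assert that ``phrase boundaries in $w$ occur only where consecutive $\BBWT$-positions (under $\psi^{-1}$) cross a $\BBWT$-run boundary.'' This is not enough. Even when $i-1$ and $i$ lie in the same run, the induced references $p(i-1)$ and $p(i)$ are only guaranteed to be \emph{cyclically} adjacent inside the Lyndon factor that contains them; if $p(i-1)$ happens to be the last text position of that factor, the ``next'' position wraps to the factor's start and is not $p(i-1)+1$ in text order. That breaks the phrase without any run boundary being crossed. The paper's proof handles exactly this: it argues separately that such cyclic wrap-arounds can occur at most once per referenced Lyndon necklace, giving an extra $O(\ell(w))$ phrase boundaries, and then invokes $\ell(w)\le \rlbbwt(w)$ to absorb that term. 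Your proposal mentions necklace boundaries only as an acyclicity issue and as a place where the \emph{phrase} might straddle two necklaces; the actual subtlety is on the \emph{source} side, and it affects the phrase count, not just validity.

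Two smaller points. First, acyclicity is free here and needs no extra $\ell$ literals: each reference goes from $p(i)$ to $p(i-1)$, so following references strictly decreases the $\BBWT$ rank and must terminate at a run head. Second, $\psi$ does not decompose into $\ell$ cycles ``one per Lyndon necklace''; it has one cycle per Lyndon \emph{factor}, i.e.\ $\sum_i k_i$ cycles. This does not affect the final bound, but it matters if you try to tie the extra literals to cycles of $\psi$.
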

\begin{proof}
    We follow the existence proof for a BMS of size $O(r(w))$ by
Navarro et al.~\cite{navarro21approximation}.
    We consider a BMS such that
    each text position that does not correspond to a beginning of a same-character run in $\BBWT(w)$ will reference the text position
    corresponding to the preceding character in the $\BBWT(w)$.
    It is clear that there are no cycles in such a referencing,
    and we claim that this allows the string to be partitioned into
    $O(\rlbbwt{})$ phrases, such that
    references of adjacent positions in a given phrase of length at least two point to adjacent positions.

    Focus on the $i$-th Lyndon necklace $f_i^{k_i}$ of $w$.
    For any cyclic rotation of $f_i$,
    its $k_i$ copies occur adjacently when writing all cyclic conjugates of all Lyndon factors in the $\omega$-order,
    and thus correspond to adjacent characters in a run in $\BBWT(w)$.
    It follows that for any position in the last $k-1$ copies of $f_i$, the reference points to the corresponding position in the preceding copy. Thus, adjacent positions reference adjacent positions, and can be contained in the same phrase.

    Next, consider a position in the first copy of $f_i$ that does not correspond to a beginning of a same-character run in $\BBWT(w)$.
    Then, since the character at this position and the preceding (in $\omega$-order) cyclic string is the same,
    their preceding positions must also correspond to adjacent positions in $\BBWT(w)$.
    This implies that,
    as long as the corresponding position is again not a beginning of a same-character run,
    adjacent positions will refer to adjacent positions, albeit, in the cyclic sense.
    Being the beginning of a same-character run in $\BBWT(w)$ can happen at most $\rlbbwt{}$ times.
    Being adjacent in the cyclic sense, but not being adjacent in text-order can happen at most once
    per referenced Lyndon necklace.
    Thus, the number of times adjacent text positions can be in a different phrase is bounded by $O(\rlbbwt{}+\ell(w))$.
    Since $\ell(w) \leq \rlbbwt{}$ can be shown from Corollary 2. of~\cite{boucher21rindexing},
    this concludes the proof.
\qed
\end{proof}

\begin{example}\label{exBMS}
    \def\checkmark{\tikz\fill[scale=0.3](0,.35) -- (.25,0) -- (1,.7) -- (.25,.15) -- cycle;}
    For the string $w =\texttt{abbbabbababab} $, we give below an example for the BMS computed from its BBWT\@.
    The Lyndon factorization of $w$ is \texttt{abbb}, \texttt{abb}, \texttt{ab}, \texttt{ab}, \texttt{ab}.
    The number of runs $\rlbbwt$ is $6$.
    BBWT positions belonging to a referencing phrase are marked with \checkmark{} in the last row of the table below.
    We therefore have $6$ non-referencing phrases.

    \begin{center}
        \begin{tabular}{l*{13}{p{1.5em}}}
            $i$          & 1 & 2          & 3          & 4          & 5          & 6 & 7          & 8          & 9 & 10         & 11 & 12 & 13 \\
            $w[i]$       & a & b          & b          & b          & a          & b & b          & a          & b & a          & b  & a  & b  \\
            $\BBWT[i]$   & b & b          & b          & b          & b          & a & a          & a          & b & b          & a  & b  & a  \\
            $\CSA[i]~$   & 8 & 10         & 12         & 5          & 1          & 9 & 11         & 13         & 7 & 4          & 6  & 3  & 2  \\
            $\CSA[i]-1~$ & 9 & 11         & 13         & 7          & 4          & 8 & 10         & 12         & 6 & 3          & 5  & 2  & 1  \\
            ref?         &   & \checkmark & \checkmark & \checkmark & \checkmark &   & \checkmark & \checkmark &   & \checkmark &    &    &    \\
        \end{tabular}
    \end{center}

    The referencing phrases are computed as follows:
    In the table above, $\CSA$ is the circular suffix array whose entry $\CSA[i]$ denotes the text position after the one from which we took $\BBWT[i]$ (or, if $\BBWT[i]$ belongs to the last character of a Lyndon factor $F$, the starting position of $F$).
    The row $\CSA[i]-1$ denotes the text position corresponding to $\BBWT[i]$.
    By construction of our BMS, the \CSA{} entry positions 2-5, 7-8, and 10 correspond to referencing phrases, i.e.,
    the text positions after 10,12,5,1,11,13,4 (applying \CSA{} on these entry positions), i.e.,
    the text positions
    11, 13, 7, 4, 10, 12, 3.
    These positions refer to
    9, 11, 13, 7, 8, 10, 6, respectively.
    If we group together neighboring positions that have the same offset to its reference,
    we obtain $9$ phrases, which are visualized in \cref{fig:bmsfact}.
\end{example}

\begin{figure}[htpb]
    \centering
    \includegraphics{./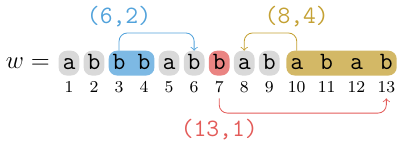}
    \caption{The BMS factorization of \cref{exBMS}}
    \label{fig:bmsfact}
\end{figure}

\begin{theorem}\label{thmUpperBoundZ}
    For any string, $\rlbbwt{} = O(z\log^2 n)$.
\end{theorem}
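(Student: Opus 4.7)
The plan is to adapt the Kempa--Kociumaka argument that $r = O(z\log^2 n)$ to the bijective setting. A direct reduction of the form $\rlbbwt{} \leq O(\rlbwt{} \cdot \operatorname{polylog} n)$ is already ruled out by \cref{thmLognMultBWT}, so a black-box appeal to the known BWT bound cannot work. Recall that Kempa--Kociumaka combine (i) a $\tau$-synchronizing set for $w$ of size $O((n/\tau)\cdot \delta \log(n/\delta))$ with (ii) the inequality $\delta \leq z$, where $\delta$ is the substring complexity: at each dyadic scale $\tau = 2^k$, only $O(\delta \log(n/\delta))$ BWT run-breaks can have a ``differentiating context'' of length in $[\tau, 2\tau)$, and summing across the $O(\log n)$ scales yields $O(z\log^2 n)$.

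The first step is to re-express each run-break of $\BBWT(w)$ in terms of $\omega$-contexts. A run-break between BBWT positions $i$ and $i+1$ corresponds to two $\omega$-consecutive cyclic rotations of Lyndon factors whose terminating characters differ; extending these rotations periodically, their longest common $\omega$-prefix $\pi$ is a substring of $w^2$, and distinct run-breaks yield distinct triples $(\pi, a, b)$ with $a \ne b$. The second step is to classify run-breaks by the dyadic scale of $|\pi|$ and apply the synchronizing-set counting to $w^2$ so that any wrap-around of $\pi$ across the endpoints of $w$ is absorbed; at each scale only $O(\delta \log(n/\delta))$ distinct anchor positions can contribute a fresh context, and summing over scales $\tau = 1,2,4,\dots,n$ together with $\delta \le z$ gives the claimed bound.

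The main obstacle will be the $\omega$-contexts coming from short Lyndon factors, because their periodic extensions are exactly the highly periodic strings that the synchronizing-set machinery deliberately excludes. The plan here is to treat these separately, exploiting that (i) the $k_i$ cyclic rotations of a single Lyndon necklace $f_i^{k_i}$ occupy adjacent BBWT positions and thus contribute only $O(|f_i|)$ internal run-breaks coming from $\BBWT(f_i)$ itself, and (ii) the bound $\ell(w) \le \rlbbwt{}$ used already in \cref{lemInduceBMS} lets a term of $O(\ell(w))$ coming from Lyndon-necklace boundaries be absorbed without inflating the final bound. A second, more delicate point is when $\pi$ straddles a Lyndon-factor boundary of the concatenation underlying $\BBWT(w)$; working inside $w^2$ rather than $w$ uniformises this case and keeps the anchor-counting argument intact.
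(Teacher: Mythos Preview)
Your overall strategy---reinterpret run-breaks of $\BBWT(w)$ as irreducible $\omega$-LCP positions and bound them scale by scale \`a la Kempa--Kociumaka---is exactly the paper's approach. The paper, however, does not go through string synchronizing sets and $\delta$; it uses the more elementary substring-counting step of~\cite{DBLP:journals/cacm/KempaK22} (at most $3kz$ distinct length-$3k$ substrings of $w$), and then observes that any length-$3k$ context which is \emph{not} a substring of $w$ must be a wrap-around substring of the cyclic closure of some Lyndon necklace, of which each necklace contributes at most $3k$ new ones.

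There are two concrete gaps in your plan. First, absorbing an $O(\ell(w))$ boundary term via $\ell(w)\le\rlbbwt$ is circular: it yields an inequality of the shape $\rlbbwt \le C\, z\log^2 n + C'\,\rlbbwt$, which says nothing unless $C'<1$, and nothing in your sketch controls that constant. The paper instead invokes $\ell(w)<4z$~\cite{DBLP:conf/cpm/UrabeNIBT19}, which makes the necklace contribution $O(kz)$ at every scale and removes the circularity outright. Second, the assertion that the $\omega$-context $\pi$ is always a substring of $w^2$ is false: $\pi$ is a substring of $f_i^\infty$ for some Lyndon factor $f_i$, and when $k_i=1$ there is no reason for two adjacent copies of $f_i$ to occur anywhere in $w^2$ (the two copies of $f_i$ inside $ww$ sit $n$ positions apart, not $|f_i|$ apart). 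So ``working inside $w^2$'' does not uniformise the wrap-around case, and the synchronizing-set anchors you place in $w^2$ need not see these contexts at all. The paper's remedy---count, per necklace, the at most $3k$ length-$3k$ substrings of its cyclic closure that are not already substrings of $w$, and multiply by $\ell(w)=O(z)$---handles both your short-factor and your boundary-straddling cases in a single stroke and avoids the heavier synchronizing-set machinery.
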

\begin{proof}
    (Sketch)
    We follow the proof of $r = O(z\log^2 n)$ by Kempa and Kociumaka~\cite{DBLP:journals/cacm/KempaK22}.
    The LCP array of a string $w$ is an array of integers such that its $i$-th entry is the length of the longest common prefix ($lcp$) between
    the lexicographically $(i-1)$-th and $i$-th cyclic rotation of $w$.
    An \emph{irreducible LCP position} is a position $i$ such that $i=1$ or $\BWT(w)[i-1]\neq\BWT(w)[i]$, and thus
    the number $r$ of BWT runs is the number of irreducible LCP positions.
    For a multiset of primitive cyclic strings,
    we analogously define
    the $\omega$-LCP array
    such that its $i$-th entry is
    the $lcp$ between $x^\infty$ and $y^\infty$,
    where $x$ and $y$ are respectively the $(i-1)$-th and $i$-th string, in $\omega$-order,
    among all cyclic rotations of all Lyndon factors of the Lyndon factorization of $w$.
    By construction, the number $\rlbbwt{}$ of BBWT runs is the number of irreducible $\omega$-LCP positions,
    i.e., $i=1$ or $\BBWT(w)[i-1]\neq\BBWT(w)[i]$.
    Note that $\omega$-LCP values can be infinite when there are Lyndon necklaces in the Lyndon factorization with exponent at least~$2$, but they can be safely disregarded since they are not irreducible.
    The theorem follows if we can
    show that for any value $k$, the number of irreducible
    $\omega$-LCP values in $[k,2k)$ is $O(z\log n)$
    and considering $k = 2^i$ for $i = 0,\ldots, \lfloor \log n\rfloor$.

    The arguments in the proof in~\cite{DBLP:journals/cacm/KempaK22} proceed by first
    asserting that for any integer $k$, a string contains at
    most $3kz$ distinct strings of length $3k$. Then, each irreducible
    LCP value in $[k,2k)$ is associated with a cost of $k$, which are charged to positions
    in the at most $3kz$ strings that have an occurrence crossing the corresponding suffix array position, and it is shown that each substring can be charged at most $2\log n$ times.
    The total cost is thus at most $6kz\log n$ and thus the number of irreducible LCP values is $O(z\log n)$.

    For $\omega$-LCP, the corresponding length $3k$ substring associated with the suffix array position may not occur in the original string but instead will correspond to a substring of some Lyndon necklace of the Lyndon factorization.
    Note that there are at most $3k$ distinct substrings of length $3k$ that are not substrings of the original string but a substring of a given Lyndon necklace.
    Since $\ell(w)<4z$~\cite{DBLP:conf/cpm/UrabeNIBT19},
    we have that the total number of such distinct substrings of length $3k$ that occur
    in this context is still bounded by $O(kz)$, and that the arguments still hold.\qed
\end{proof}

Despite sharing common traits,
$\rlbwt{}$ and $\rlbbwt$ can be
asymptotically different:

\begin{theorem}\label{thmLognMultBWT}
    There exists a family of strings with
    $\rlbbwt{} = \Omega(\log n)$ and $\rlbwt = 2$.
\end{theorem}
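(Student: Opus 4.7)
The plan is to take the Fibonacci words $F_n$ (with $F_0 = \texttt{b}$, $F_1 = \texttt{a}$, $F_k = F_{k-1}F_{k-2}$) as the witness family, so that $|F_n| = \Theta(\phi^n)$ where $\phi$ is the golden ratio. For the BWT side, we invoke the classical fact that Sturmian standard words such as $F_n$ satisfy $\BWT(F_n) = \texttt{b}^p \texttt{a}^q$ with exactly two runs: the sorted cyclic conjugates of a Sturmian word partition cleanly by last character, giving $\rlbwt(F_n) = 2$.

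For the BBWT side, we aim to show $\rlbbwt(F_n) = \Omega(n) = \Omega(\log |F_n|)$. A sanity check on small cases is encouraging: the Lyndon factorizations of $F_4, F_5, F_6$ are $(\texttt{ab}, \texttt{aab})$, $(\texttt{ab}, \texttt{aabab}, \texttt{a})$, and $(\texttt{ab}, \texttt{aabab}, \texttt{aab}, \texttt{aab})$ respectively, and the resulting BBWTs have $4, 5, 6$ runs in turn. Since $F_n$ is not Lyndon for $n \geq 3$ (some proper suffix, either $\texttt{a}$ or $\texttt{aab}$, is lexicographically smaller than $F_n$), the Lyndon factorization is nontrivial, and the rotation multisets of its factors interact in $\omega$-order to produce run boundaries in $\BBWT(F_n)$. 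The structural argument uses the recursion $F_n = F_{n-1}F_{n-2}$: passing from $F_{n-1}$ to $F_n$ introduces at least one new Lyndon necklace whose cyclic rotations interleave with those of the existing necklaces in $\omega$-order, producing at least one additional character change in $\BBWT(F_n)$.

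The hard part will be making the interleaving precise: we need to locate, at each new level, a position in the $\omega$-sorted list of rotations where the last character flips between two adjacent rotations belonging to different Lyndon factors. We plan to do this inductively on $n$, pinpointing a specific run boundary associated with the Lyndon factor introduced at level $n$ and relying on the induction hypothesis to preserve the $\Omega(n-1)$ boundaries already counted. Combined with $\rlbwt(F_n) = 2$, this yields the desired family with $\rlbbwt = \Omega(\log n)$ and $\rlbwt = 2$.
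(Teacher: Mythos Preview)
Your choice of the Fibonacci words and the invocation of the Sturmian result $\rlbwt(F_n)=2$ match the paper exactly. The difference is entirely on the $\rlbbwt$ side, and there your proposal is not yet a proof: you describe an inductive ``interleaving'' argument that would locate a fresh run boundary in $\BBWT(F_n)$ at each step, but you do not carry it out, and you yourself flag it as ``the hard part.'' Two concrete obstacles: (i) the Lyndon factorization of $F_n$ does \emph{not} simply extend that of $F_{n-1}$ by one necklace---going from $F_5=\texttt{ab}\cdot\texttt{aabab}\cdot\texttt{a}$ to $F_6=\texttt{ab}\cdot\texttt{aabab}\cdot\texttt{aab}\cdot\texttt{aab}$ the last factor is replaced, not appended to---so your induction hypothesis about ``preserving the $\Omega(n-1)$ boundaries already counted'' is not obviously stable; (ii) even granting a new necklace, arguing that its rotations interleave in $\omega$-order so as to create a new last-character alternation is delicate and you give no mechanism for it.

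The paper bypasses all of this with a two-line argument you are missing: first, the general inequality $\ell(w)\le \rlbbwt(w)$ (the number of Lyndon factors lower-bounds the number of BBWT runs, cf.\ Boucher et al.), and second, Melan\c{c}on's explicit description of the Lyndon factorization of the infinite Fibonacci word, which shows that the $k$-th Lyndon factor has length $f_{2k+2}$. Summing gives $\sum_{k=0}^{i}f_{2k+2}=f_{2i+3}-1$, so $F_{2i+3}$ has $i+2$ Lyndon factors, whence $\rlbbwt(F_{2i+3})\ge i+2=\Omega(\log|F_{2i+3}|)$. Your own small-case computations already exhibit this growth in $\ell$; the missing step is simply to invoke $\ell\le\rlbbwt$ rather than attempting to track run boundaries directly.
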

\begin{proof}
Define the Fibonacci words as follows:
    $F_0 = \texttt{b}, F_1 = \texttt{a}, F_i = F_{i-1}F_{i-2}$.
    The infinite Fibonacci word is $\lim_{k\rightarrow\infty} F_k$.
    Melan\c{c}on~\cite{melancon99lyndon}
    showed that the $k$-th factor (the first factor being the $0$-th)
    of the Lyndon factorization of the infinite Fibonacci word,
    has length $f_{2k+2}$, where $f_i = |F_i|$.
    Now, $\sum_{k=0}^{i} f_{2k+2} =
        -f_1 + (\cdots(((f_1 + f_2) + f_4) + f_6) + \cdots) + f_{2i+2}
        = f_{2i+3}-1$.
    Therefore, the word obtained by deleting the last symbol of $F_{2i+3}$
    has $i+1$ distinct Lyndon factors.
    Noticing that the last symbol of $F_{2i+3}$ must be `$\texttt{a}$' and will
    form a distinct
    Lyndon factor, we have that the size of the Lyndon factorization of $F_{2i+3}$ is $i+2$.
    Since $\ell(w) \le \rlbbwt$~\cite{boucher21rindexing},
the \BBWT{} of the $k$-th Fibonacci word~$F_k$ for odd $k$
    has $\Omega(k)$ runs,
while the BWT of any Fibonacci word has $\rlbwt(F_k) = 2$ runs~\cite{mantaci03sturmian}.
\qed
\end{proof}

We have not yet been able to find a family of strings
where $\rlbbwt{} = o(\rlbwt{})$.

\section{RLBBWT and Rotation}
Theorem~\ref{thmLognMultBWT} may give the impression that $\rlbwt{}$ may be a
smaller measure compared to~$\rlbbwt{}$.
However, if we are to incorporate a rotation operation, which can be encoded as a single
$\log n$-bit integer, we could possibly obtain a representation smaller than $\rlbwt{}$ using BBWT.
This is because we have $\rlbwt(x) = \rlbbwt(x)$ for the Lyndon rotation~$x$ of any primitive word.
The Lyndon rotation of a primitive word can be computed in linear time~\cite{shiloach81fast}.

For any string $w$, let $\hat{w} = \arg\min_{uv = w}\{ \rlbbwt(vu) \}$
be the {\em optimal} rotation with respect to~$\rlbbwt$.
We observe that $\hat{w}$ is not always the Lyndon rotation of $w$.
For example, for the Lyndon word
$w=\texttt{aaabaabaaabaabb}$,
we have that
$\BWT(w) = \BBWT(w) = \texttt{bbbaabaaaabaaaa}$,
thus $\rlbbwt(w) = 6$.
However, we have that
$\hat{w} = \rot(w) = \texttt{baaabaabaaabaab}$
and $\BBWT(\hat{w}) = \texttt{bbbbaaaaaaaaaab}$,
thus $\rlbbwt(\hat{w})=3$.

Since $\BBWT(w)$ (and hence $\rlbbwt(w)$) can be computed in $O(n)$ time,
it is straightforward to compute $\hat{w}$ (and hence $\rlbbwt(\hat{w})$) in $O(n^2)$ time.
A subquadratic time algorithm for this problem would be very interesting.
(For LZ77, it was recently shown that indeed subquadratic time computation is possible~\cite{DBLP:conf/cpm/BannaiCR24}.)
While we have not yet been able to achieve this,
we give a partial result: a linear time algorithm for computing
the Lyndon factorizations, a precursor to computing BBWT,
of all cyclic rotations.
\begin{theorem}\label{thmCyclicLyndonFact}
    We can compute the sizes of the Lyndon factorizations of all cyclic rotations of $w$ in time linear in the length of $w$.
\end{theorem}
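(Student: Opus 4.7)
The plan is to reduce the task to manipulations involving the Lyndon factorization of the doubled string $w^2 = ww$. First I compute the Lyndon factorization $w^2 = g_1 g_2 \cdots g_M$ (with $g_1 \succeq g_2 \succeq \cdots \succeq g_M$) in $O(n)$ time via Duval's algorithm. An easy structural observation is that every factor satisfies $|g_j| \le n$: if some $|g_j| > n$, then the proper suffix of $g_j$ of length $|g_j| - n$ would coincide with a prefix of $w$ and hence with a prefix of $g_j$, contradicting the Lyndon property of $g_j$ (a Lyndon word is strictly smaller than each of its proper suffixes).

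Each cyclic rotation of $w$ equals the window $w^2[p..p+n-1]$ for a unique $p \in \{1,\ldots,n\}$; after an $O(n)$-time preprocessing that labels each position of $w^2$ with its factor index and in-factor offset, the factor $g_a$ containing $p$ at offset $\alpha$ and the factor $g_b$ containing $p+n-1$ at offset $\beta$ can be retrieved in $O(1)$. The central claim is that the Lyndon factorization of the rotation is exactly the concatenation of (i) the Lyndon factorization of the suffix $g_a[\alpha..|g_a|]$, (ii) the interior factors $g_{a+1}, \ldots, g_{b-1}$, and (iii) the Lyndon factorization of the prefix $g_b[1..\beta]$. By uniqueness of the Lyndon factorization it suffices to show that this concatenation is non-increasing: within each of the three blocks this is immediate, at the left junction I invoke the lemma that every Lyndon factor occurring in the Lyndon factorization of a proper suffix of a Lyndon word $g$ is strictly $\succ g$, so the last factor of block (i) is $\succ g_a \succeq g_{a+1}$, and at the right junction the first factor of block (iii) is a proper Lyndon prefix of $g_b$ and hence $\prec g_b \preceq g_{b-1}$. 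The degenerate case $a = b$ forces, via the length bound, the window to coincide with $g_a$ itself, so the factorization size is one.

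For each $g_j$ I then precompute two arrays in $O(|g_j|)$ time: $S_j[\alpha]$, the Lyndon factorization size of $g_j[\alpha..|g_j|]$, filled right-to-left by chaining through the next-smaller-suffix array of $g_j$; and $P_j[\beta]$, the Lyndon factorization size of $g_j[1..\beta]$, obtained by running Duval's algorithm on $g_j$ with a running count of completed factors. Since $\sum_j |g_j| = 2n$, the total preprocessing is $O(n)$, and each rotation's size is simply $\ell(\rot^i(w)) = S_a[\alpha] + (b-a-1) + P_b[\beta]$, returned in $O(1)$. The main obstacle is the junction lemma, namely establishing that every Lyndon factor in the Lyndon factorization of a proper suffix of a Lyndon word is strictly greater than that Lyndon word; this is the key combinatorial ingredient, and its proof requires careful case analysis ruling out the periodicity configurations that would otherwise let such a factor coincide with a prefix of the enclosing Lyndon word.
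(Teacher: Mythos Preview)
Your overall strategy matches the paper's: view each rotation as a length-$n$ window in $ww$ and observe that its Lyndon factorization refines the Lyndon factorization of the ambient string, so that the answer is ``size of a suffix piece'' plus ``number of whole interior factors'' plus ``size of a prefix piece''. The paper streamlines this by first rotating $w$ to its Lyndon conjugate, so that the factorization of $ww$ is just $w,w$; every window then consists of a suffix of $w$ followed by a prefix of $w$, and the two piece-sizes are read off from the right and left Lyndon trees of $w$ respectively. Your version keeps $w$ arbitrary and works with the full factorization $g_1\cdots g_M$ of $ww$; that is fine too, and the bound $|g_j|\le n$ and the three-block formula are correct. Incidentally, the ``junction lemma'' you flag as the main obstacle is immediate: the last factor $h_k$ of the Lyndon factorization of a proper suffix of $g_a$ is itself a proper suffix of $g_a$, hence $h_k\succ g_a$, and every earlier $h_i\succeq h_k$; no periodicity case analysis is needed.

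There is, however, a real slip in your computation of $P_j[\beta]$. ``Running Duval on $g_j$ with a running count of completed factors'' does not work: because $g_j$ is Lyndon, Duval commits \emph{no} factor until the very end, so that count is identically zero for every proper prefix. What you can do instead is exploit Duval's internal state: after scanning $g_j[1..\beta]$ the algorithm knows the current period $p$, with $g_j[1..p]$ Lyndon and $g_j[1..\beta]=u^qv$ where $|u|=p$, $q=\lfloor\beta/p\rfloor$, $v=g_j[1..(\beta\bmod p)]$; since $u$ is the longest Lyndon prefix of $g_j[1..\beta]$ (a longer one would inherit period $p$ and hence a border), one gets the $O(1)$ recursion $P_j[\beta]=q+P_j[\beta\bmod p]$. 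Equivalently, use the left Lyndon tree of $g_j$ as the paper does. With this correction your argument is complete.
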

\begin{proof}
(Sketch)
    Assume that $w$ is Lyndon,
    consider the string $W=ww$,
    and view the cyclic rotations of $w$ as substrings of length $|w|$ of $W$.
    Any Lyndon factorization of such a substring consists
    of the Lyndon factorization of a suffix of $w$ and a prefix of $w$,
    since any $x = uv$ such that $u$ is a suffix of $w$ and $v$ is a prefix of $w$ cannot be Lyndon:
    it would imply $uv \prec v \prec w \prec u$, a contradiction.

    We observe that the factors of the Lyndon factorization for any suffix of~$w$,
    are the sequence of maximal right sub-trees of the standard (right) Lyndon tree~\cite{bannai17runs} of $w$ that are contained in the suffix.
    Similarly, for any prefix of $w$, they are the maximal left sub-trees of the left Lyndon tree~\cite{DBLP:journals/iandc/BadkobehC22} of $w$ that are contained in the prefix.

    The right Lyndon tree of a Lyndon word $w$ is a binary tree
    defined recursively as follows:
    if $w$ is a single letter, it is a leaf,
    otherwise, the left and right child are respectively the
    right Lyndon trees of $u,v$
    where $w=uv$ and $v$ is the longest proper suffix of $w$ that is a Lyndon word. Note that it can be shown that this choice of $v$ implies that $u$ is a Lyndon word.
    The left Lyndon tree is defined analogously, but
    $u$ is the longest proper prefix of $w$ that is a Lyndon word.
    Similarly, it can be shown that this choice of $u$ implies that $v$ is a Lyndon word.

    By definition of the Lyndon trees, and the property of the Lyndon factorization
    which states that the first (resp. last) factor is the longest prefix (resp. suffix) that is a Lyndon word,
    it is a simple observation that the Lyndon factorization of a suffix of $w$
    is exactly the sequence of maximal right nodes of the right Lyndon tree
    that are contained in the suffix,
    and the Lyndon factorization of a prefix of $w$ is exactly
    the sequence of
    maximal left nodes of the left Lyndon tree
    that are contained in the prefix. See Fig~\ref{fig:lyndontrees} for an example.

    Both trees can be computed in linear time~\cite{bannai17runs,DBLP:journals/iandc/BadkobehC22}.
    It is not difficult to see that the changes in the sequences,
    and thus the sizes of the Lyndon factorizations
    can be computed in total linear time for each of the suffixes and prefixes,
    by a left-to-right traversal on the trees.

    \begin{figure}[h]
        \centerline{
            \includegraphics[width=0.8\textwidth]{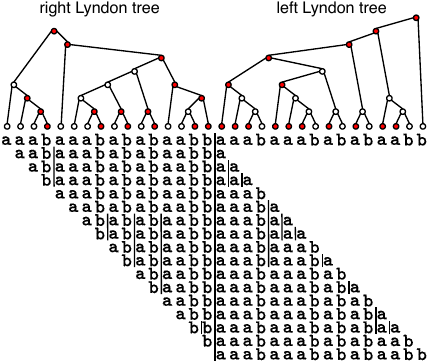}
        }
        \caption{The left and right Lyndon trees of the string
            $w = \texttt{aaabaaabababaabb}$.
            The Lyndon factorization of cyclic rotations of $w$ are shown below,
            where factors are delimited by vertical bars.
            Right-nodes of the right Lyndon tree, and left-nodes of the left Lyndon tree are marked in red.
}\label{fig:lyndontrees}
    \end{figure}

\qed
\end{proof}

\section{BBWT Reachability}
Further developing the idea of combining rotation and \BBWT{}
in order to obtain a smaller representation of a string, we give the following conjecture.
\begin{conjecture}\label{conjecture:reachability}
    Given two words with the same Parikh vector,
    we can transform one to the other by using only rotation
    and \BBWT{} operations.
\end{conjecture}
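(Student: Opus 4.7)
The plan is to show that the operations $\rot$ and $\BBWT$ together generate a transitive action on the finite set $S_{\vec{p}}$ of all strings with Parikh vector $\vec{p}$. Both operations are bijections on $\Sigma^*$ that preserve $\vec{p}$, so each acts as a permutation of the finite set $S_{\vec{p}}$; hence the semigroup they generate is actually a group (inverses arise from finite orbit iteration), and reachability is an equivalence relation on $S_{\vec{p}}$. The goal then reduces to exhibiting a canonical representative for $\vec{p}$ that lies in every orbit. A natural choice is the $\omega$-sorted string $\sigma_1^{p_{\sigma_1}} \cdots \sigma_k^{p_{\sigma_k}}$, and I would aim to prove that from any $w \in S_{\vec{p}}$ a finite sequence of $\rot$ and $\BBWT$ operations reaches it.

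To attack transitivity I would attempt induction on the alphabet size $|\Sigma|$, using \cref{cor:reachable} as the base case $|\Sigma|=2$. Given a target $t$ with the same Parikh vector as $w$, the strategy is to first fix the positions of one distinguished character (say the maximum character $\sigma_{\max}$) and then recurse on the remaining alphabet. Concretely, I would try to show that rotations (which cyclically shift) combined with $\BBWT$ (which groups equal characters into runs) suffice to reach an intermediate string in which $\sigma_{\max}$ appears in exactly the positions prescribed by $t$; the remaining freedom then lives over $\Sigma \setminus \{\sigma_{\max}\}$ and should reduce to a smaller instance. An alternative route is a direct generation argument: show that for any pair of characters $a,b\in\Sigma$ there exists a sequence of operations implementing an arbitrary permutation of the $a$- and $b$-positions while leaving the other characters fixed; combining this with the binary base case then yields transitivity for~$\vec{p}$.

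The main obstacle will be that $\BBWT$ is a deeply global operation whose effect depends on the Lyndon factorization and the $\omega$-order of all cyclic conjugates of all factors; it admits no straightforward local description that would support inductive decompositions. In particular, the naive ``project onto a sub-alphabet and recurse'' idea fails because $\BBWT$ does not commute with restriction to a sub-alphabet, so controlling the positions of $\sigma_{\max}$ via $\BBWT$ inevitably scrambles the remaining characters. Moreover, no simple potential function (number of same-character runs, distance to the sorted string, number of Lyndon factors) is known to monotonically decrease under well-chosen applications of $\BBWT$, which is why the special cases of \cref{cor:reachable} rely on tailored combinatorial structure: counting suffices for binary alphabets, and the distinctness of characters removes all ambiguity in the $\omega$-order for permutations. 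Bridging the intermediate regime, where multiple characters may each appear with large multiplicity and the $\omega$-order genuinely mixes Lyndon-necklace structure with character identities, is precisely where the conjecture's difficulty lies.
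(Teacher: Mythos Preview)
The statement you are attempting to prove is a \emph{conjecture} in the paper, not a theorem: the authors do not provide a proof of it in the general case. They explicitly state that they have only verified it computationally for ternary strings up to length~17 and proved it in the two special cases covered by \cref{cor:reachable} (binary alphabet and all-distinct symbols). There is therefore no ``paper's own proof'' to compare your proposal against.

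Your proposal is not a proof but a strategy outline, and you yourself correctly identify where it breaks down. The reduction to transitivity of the group generated by $\rot$ and $\BBWT$ on $S_{\vec{p}}$ is sound, and the choice of the sorted string as canonical target is natural (indeed this is exactly the target $y$ used in \cref{lemma:Lyndon_oneshift_smaller}). But the inductive step you sketch---fixing the positions of $\sigma_{\max}$ and recursing on $\Sigma\setminus\{\sigma_{\max}\}$---has no content as stated: you give no mechanism for reaching an intermediate string with $\sigma_{\max}$ in prescribed positions, and as you note, $\BBWT$ does not commute with alphabet restriction, so even if you could, the recursive call would not be over the same operations. The alternative route (realising arbitrary two-letter permutations while fixing all other characters) is a strictly stronger statement than the conjecture itself and you offer no argument for it. The paper's actual technique for the special cases, \cref{lemma:Lyndon_oneshift_smaller}, works by a monotone descent in lexicographic order via $\BBWT^{-1}\circ\rot$, but requires the hypothesis $x[i]\neq x[n]$, which is precisely what fails for general alphabets; your proposal does not engage with this obstruction or suggest a replacement potential function. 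In short, the gap is exactly the one the paper leaves open, and your outline does not close it.
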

If true, this would suggest that it is possible, for example, to represent a string~$w$ based on $w$'s Parikh vector,
and a sequence of integers where each integer alternately represents the offset of the rotation or the number of times $\BBWT{}$ is applied,
to reach $w$ from the lexicographically smallest string with the same Parikh vector.

We have computationally confirmed the conjecture for ternary strings of up to length 17, with code available at \url{https://github.com/koeppl/bbwtreachability},
and have proved it for the specific
cases where the alphabet is binary, or, when all symbols are distinct.
We first give the following lemma which shows that, under the condition of the lemma,
we can obtain a lexicographically smaller string using
rotation operations and an inverse BBWT operation.

\begin{lemma}\label{lemma:Lyndon_oneshift_smaller}
    Let $x$ be a word of length $n$ whose smallest rotation is itself (i.e., a necklace),
    over the alphabet
    $\{ c_1,\ldots, c_\sigma \}$
    where  $c_1 \prec \cdots \prec c_\sigma$,
    and let $(e_1,\ldots, e_\sigma)$ be the Parikh vector of $x$.
    Let $y = c_1^{e_1}\cdots c_\sigma^{e_\sigma}$,
    i.e., the lexicographically smallest string with the same Parikh vector as $x$,
    and $i = lcp(x,y)$. If $x[i] \neq x[n]$, then,
    there exists $k$ such that
    $\rot^k(\BBWT^{-1}(\rot(x))) \prec x$.
\end{lemma}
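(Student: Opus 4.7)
The plan is to set $s := \rot(x)$ and $w := \BBWT^{-1}(s)$, and to exhibit an explicit rotation $\tilde w = \rot^k(w)$ with $\tilde w[1..i+1] = y[1..i+1]$; this suffices, since $x[1..i] = y[1..i]$ and $x[i+1] > y[i+1]$ by the definition of $i$ give $\tilde w \prec x$. The candidate $\tilde w$ will be the rotation that brings the $\omega$-smallest Lyndon factor $f^*$ of $w$ to the front, so $k = |f^*|$ because the Lyndon factorization of $w$ ends with $f^*$. Under the inverse-BBWT correspondence, the factor $f^*$ is produced by the $\psi_s$-cycle containing position~$1$, since position~$1$ of $\BBWT(w) = s$ is the last symbol of the $\omega$-smallest cyclic rotation of any Lyndon factor.

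Write $m$ for the length of $x$'s initial $c_1$-run, so $m \le e_1$. The core computation is on $\psi_s$ restricted to positions $2,\ldots,i+1$. Since $x[1..i] = y[1..i]$, we have $s[j+1] = y[j]$ for $j \in [1,i]$, and a direct count using the rank-based definition, together with the fact that $s$ and $y$ share the same Parikh vector, yields
\begin{equation*}
\psi_s(j+1) \;=\; j + [\, x[n] = y[j] \,] \qquad \text{for } j \in [1,i],
\end{equation*}
where $[\cdot]$ denotes the Iverson bracket. We first observe that $x[n] \neq c_1$, for otherwise $\rot(x)$ would begin with $c_1^{m+1}$ while $x$ begins only with $c_1^m$, contradicting $x$ being a necklace. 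Combined with the hypothesis $x[n] \neq y[i]$, this lets us rule out every would-be fixed point $\psi_s(j+1) = j+1$ via a sortedness/bijectivity argument: if $y[k] = x[n]$ for some $k < i$, then sortedness of $y$ gives $y[k+1] \geq x[n]$; should $y[k+1] > x[n]$, the values $\psi_s(k+1) = k+1$ and $\psi_s(k+2) = k+1$ collide in the bijection $\psi_s$, and should $y[k+1] = x[n]$ we iterate until either a collision is forced or we reach $y[i] = x[n]$. Hence $\psi_s(j+1) = j$ for every $j \in [1,i]$, and $i+1 \to i \to \cdots \to 2 \to 1$ is a backward chain inside the $\psi_s$-cycle of position~$1$.

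Reading the factor off this cycle yields a rotation $r$ of $f^*$ with $r[k] = s[\psi_s^{-k}(1)]$ for $k=1,\ldots,|f^*|$ (where $\psi_s^{-|f^*|}(1) = 1$). The chain immediately gives $r[k] = s[k+1] = y[k]$ for $k \le i$, and for $k = i+1$ we compute $r[i+1] = s[\psi_s^{-1}(i+1)]$: the symbol of rank $i+1$ in sorted $s$ is $y[i+1]$, so $r[i+1] = y[i+1]$. Thus $r[1..i+1] = y[1..i+1]$.

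The remaining step is to transfer this prefix from $r$ to its Lyndon representative $f^*$, the lex-smallest rotation of the underlying cyclic string, and here a short two-case analysis is needed. If $m < e_1$ (so $i = m$ and $y[1..i+1] = c_1^{m+1}$), then $r$ begins with a $c_1$-run of length $m+1$, and since a Lyndon word's leading run of its smallest symbol is the longest such run among all its rotations, $f^*$ also begins with $c_1^{m+1} = y[1..i+1]$. If $m = e_1$, then the $c_1$-positions of $s$ are exactly $\{2,\ldots,e_1+1\}$, all on the chain, so $r$ collects every $c_1$ of $w$ consecutively at its start; any nontrivial rotation of $r$ breaks this maximal leading run and is therefore lex larger, forcing $f^* = r$ and the prefix identity. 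Setting $\tilde w = \rot^{|f^*|}(w)$ then completes the proof. I expect the main obstacle to be the bijection/sortedness pigeonhole in the chain step; once that chain is secured, the cycle-to-Lyndon-factor transfer is routine Lyndon bookkeeping.
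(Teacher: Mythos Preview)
Your approach is correct and shares the paper's core idea: establish the chain $\psi_s(j+1)=j$ for $j\in[1,i]$ on $s=\rot(x)$, then read off $y[1..i+1]$ as a cyclic substring of the relevant Lyndon factor of $\BBWT^{-1}(s)$.

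Two places where the paper is more direct. First, your pigeonhole/collision argument for ruling out $x[n]=y[j]$ works but is unnecessary: since $x[1..i]=y[1..i]=c_1^{e_1}\cdots c_{k-1}^{e_{k-1}}c_k^{e'}$ already uses up every copy of $c_1,\ldots,c_{k-1}$, all characters in $x[i+1..n]$ (in particular $x[n]$) are $\ge c_k=y[i]$; together with the hypothesis $x[n]\neq y[i]$ this gives $x[n]>y[i]\ge y[j]$ for every $j\le i$, so the Iverson bracket never fires. Second, your two-case analysis transferring the prefix from $r$ to $f^*$ is redundant: position~$1$ of $s=\BBWT(w)$ corresponds to the rotation $f^*$ itself (the $\omega$-smallest rotation overall is the smallest Lyndon factor, not merely a rotation of it), so $r[k]=s[\psi_s^{-k}(1)]=f^*[k]$ already holds and $r=f^*$. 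The paper sidesteps even this identification by arguing only that the Lyndon rotation of the recovered cycle has length-$(i{+}1)$ prefix at most $y[1..i+1]<x[1..i+1]$, which suffices.
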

\begin{proof}
    Note that $x[i] \neq x[n]$ implies $y \prec x$ since $y = x$ implies $i=n$.
    Let $x[1..i] = c_1^{e_1}\cdots c_k^{e'}$, where $e' \leq e_k$.
    This implies that symbols smaller than $c_{k}$ are all used up in $x[1..i] = y[1..i]$,
    and cannot occur in $x[i+1..n]$ nor $y[i+1..n]$.
    Thus,
    for all $j\in[i+1,n]$,
    it holds that
    $x[j] \succeq x[i] = c_k$,
    in particular,
    for all $j\in[1,i]$,
    it holds that
    $x[n] \succ x[i] \succeq x[j]$ since $x[i] \neq x[n]$ is assumed.

    Next, consider traversing the symbols of $\hat{x} = \rot(x) = x[n]x[1..n-1]$ using the LF mapping $\psi_{\hat{x}}$
    starting from position $i'$ such that $\psi_{\hat{x}}(i') = i+1$ to recover a cyclic substring of $\BBWT^{-1}(\hat{x})$,
    whose smallest rotation (Lyndon rotation)
    will be a substring of $\BBWT^{-1}(\hat{x})$.
    Thus, we start from the symbol $\hat{x}[i']=y[i+1]$.
    Since $\hat{x}[1] = x[n] \neq x[j]=y[j]$ for any $j\in [1..i]$
    and $\hat{x}[2..i+1]=x[1..i] = y[1..i]$, it follows that
    $\psi_{\hat{x}}(j) = j-1$ for any $j\in [2..i+1]$.
    Therefore, we have that $y[i+1]$ is prefixed by $x[1..i]$, i.e.,
    $x[1..i]y[i+1] \prec x[1..i+1]$ is a cyclic substring of $\BBWT^{-1}(\hat{x})$,
    where the inequality follows from the definition of $y$ and $i$.
    Since the length $i+1$ prefix of the Lyndon rotation
    of the whole cyclic substring that is retrieved by $\psi_{\hat{x}}$ starting from $i+1$ cannot be larger than $x[1..i]y[i+1]$,
    it follows that
    $\BBWT^{-1}(\hat{x})$ contains a length $i+1$ substring that is smaller than $x[1..i+1]$, and the lemma holds.
    \qed
\end{proof}

\begin{example}\label{ex:Lyndon_oneshift_smaller}
    For $x = \mathtt{aacb}$,
    $\BBWT^{-1}(x[n] \cdot x[1..n-1])
        = \BBWT^{-1}(\mathtt{baac})
        = \mathtt{caab}$.
    The smallest rotation of $\mathtt{caab}$ is $\mathtt{aabc}$, which is lexicographically smaller than $x$.
\end{example}
In \cref{ex:Lyndon_oneshift_smaller},
$\mathtt{caab}$ coincides here with the string $y$ in the proof of \cref{lemma:Lyndon_oneshift_smaller}.
A schematic sketch of this proof is given in \cref{fig:Lyndon_oneshift_smaller}.

\begin{figure}[htpb]
    \centering
    \begin{tikzpicture}
        \tikzset{position label/.style={above = 3pt,
},
            brace/.style={decoration={brace},
                    decorate
                }
        }
        \tikzstyle{array} = [matrix of nodes,font=\ttfamily, column sep=0.5\pgflinewidth, row sep=0.5mm, nodes in empty cells,
row 1/.style={nodes={font=\scriptsize,fill=none, minimum size=0mm, text height=0.33em}},
        ]
        \matrix[array] (M) {$v$             & 1      & 2        & \ldots   &          &          &          &  &  &  &  &  &  &  & \ldots & $i$ & $i+1$
            \\
            $x[v] = $       & $c_1$  & $c_1$    & $\ldots$ & $c_1$    &
            $c_2$           & $c_2$  & $\ldots$ & $c_2$    &
            $c_3$           & $c_3$  & $\ldots$ &
            $c_k$           & $c_k$  & $\ldots$ & $c_k$    & $\beta$  & $\ldots$
            \\
            \\
            $y[v] = $       & $c_1$  & $c_1$    & $\ldots$ & $c_1$    &
            $c_2$           & $c_2$  & $\ldots$ & $c_2$    &
            $c_3$           & $c_3$  & $\ldots$ &
            $c_k$           & $c_k$  & $\ldots$ & $c_k$    & $\alpha$ & $\ldots$
            \\
            \\
            $\hat{x}[v] = $ & $x[n]$ & $c_1$    & $\ldots$ & $c_1$    &
            $c_1$           & $c_2$  & $\ldots$ & $c_2$    &
            $c_2$           & $c_3$  & $\ldots$ &
            $c_{k-1}$       & $c_k$  & $\ldots$ & $c_k$    & $c_k$    & $\beta$  & $\ldots$
            \\};
        \draw [->] (M-6-3) -- (M-4-2.south) -- node [midway,anchor=east] {$\psi$} (M-6-2);
        \draw [->] (M-6-4) -- (M-4-3.south) -- (M-6-3);
        \draw [->] (M-6-5) -- (M-4-4.south) -- (M-6-4);
        \draw [->] (M-6-6) -- (M-4-5.south) -- (M-6-5);
        \draw [->] (    M-6-7) -- (M-4-6.south) -- (M-6-6);
        \draw [->] (M-6-8) -- (M-4-7.south) -- (M-6-7);
        \draw [->] (M-6-9) -- (M-4-8.south) -- (M-6-8);
        \draw [->] (M-6-10) -- (M-4-9.south) -- (M-6-9);
        \draw [->] (M-6-11) -- (M-4-10.south) -- (M-6-10);
        \draw [->] (M-6-12) -- (M-4-11.south) -- (M-6-11);
        \draw [->] (M-6-13) -- (M-4-12.south) -- (M-6-12);
        \draw [->] (M-6-14) -- (M-4-13.south) -- (M-6-13);
        \draw [->] (M-6-15) -- (M-4-14.south) -- (M-6-14);
        \draw [->] (M-6-16) -- (M-4-15.south) -- (M-6-15);
        \draw [->] (M-6-17) -- (M-4-16.south) -- (M-6-16);

        \draw [brace,decoration={raise=0ex}] (M-4-2.north west) -- node [position label] {$e_1$} (M-4-2.north -| M-4-5.north east);
        \draw [brace,decoration={raise=0ex}] (M-4-6.north west) -- node [position label] {$e_2$} (M-4-6.north -| M-4-9.north east);

    \end{tikzpicture}
    \caption{Schematic sketch of the proof of \cref{lemma:Lyndon_oneshift_smaller}. Here,
    $\alpha = y[i+1]$ and $\beta = x[i+1]$ with
    $c_k \leq y[i+1] = \alpha < x[i+1] = \beta$.
    Repeating the LF mapping
    produces a string that contains $x[1..i]y[i+1]$.
    }
    \label{fig:Lyndon_oneshift_smaller}
\end{figure}

The following Theorem follows from Lemma~\ref{lemma:Lyndon_oneshift_smaller}.
\begin{theorem}\label{cor:reachable}
    Given two words of the same length with the same Parikh vector,
    it is possible to transform one to the other by using only rotations and \BBWT{} transformations
    if all symbols are distinct, or
    if the alphabet is binary.
\end{theorem}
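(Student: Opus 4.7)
The plan is to show, more strongly, that every length-$n$ string $w$ with Parikh vector $P$ can be transformed into the canonical word $y_P = c_1^{e_1}\cdots c_\sigma^{e_\sigma}$ using only rotations and BBWT; from this the theorem follows by chaining $u \to y_P \to v$, possible because both primitive operations are invertible within the allowed vocabulary (rotation trivially, and $\BBWT^{-1}$ because $\BBWT$ is a bijection on the finite set of length-$n$ strings and therefore equals $\BBWT^{t-1}$ for the orbit length $t$ of the current input).

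The reduction I propose proceeds by iteration. Given a current string, first rotate it to its necklace representative $\hat{x}$ (the lex-smallest rotation). If $\hat{x} = y_P$, stop. Otherwise apply \cref{lemma:Lyndon_oneshift_smaller} to obtain a string $z = \rot^k(\BBWT^{-1}(\rot(\hat{x})))$ with $z \prec \hat{x}$; then rotate $z$ to its own necklace $\hat{z}$, which satisfies $\hat{z} \preceq z \prec \hat{x}$. Thus the necklace strictly decreases in each round. Since there are only finitely many necklaces with Parikh vector $P$, the process must terminate, and it can terminate only at the globally smallest necklace $y_P$.

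The only piece requiring case-specific reasoning is verifying the hypothesis $\hat{x}[i] \neq \hat{x}[n]$ of \cref{lemma:Lyndon_oneshift_smaller} at position $i = lcp(\hat{x}, y_P)$. In the binary case, assume both letters appear (else $\hat{x} = y_P$ trivially). A standard observation is that a binary necklace containing at least one $b$ ends in $b$: otherwise moving any trailing $a$ to the front yields a rotation with a strictly longer prefix of $a$'s, contradicting necklace-ness. Hence $\hat{x}[n] = b$. If $i \geq e_a$, the common prefix $\hat{x}[1..i] = y_P[1..i]$ already contains all $e_a$ copies of $a$, forcing $\hat{x}[i+1] = b = y_P[i+1]$, which contradicts the maximality of $i$. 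Hence $i < e_a$, so $\hat{x}[i] = a \neq b = \hat{x}[n]$. In the all-distinct case, $y_P = c_1 c_2 \cdots c_n$, the necklace property forces $\hat{x}[1] = c_1$ so $i \geq 1$, and distinctness of the symbols of $\hat{x}$ immediately gives $\hat{x}[i] \neq \hat{x}[n]$ because positions $i < n$ carry different values.

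The genuine obstacle, and the reason the general conjecture remains open, is exactly this verification: for alphabets of size $\geq 3$ with multiplicities, neither the ``ends in the largest letter'' argument nor the distinctness shortcut is available, so some new idea beyond \cref{lemma:Lyndon_oneshift_smaller} seems to be required. Conceptually, everything else in the proof is routine bookkeeping once the two special cases admit the hypothesis of the lemma.
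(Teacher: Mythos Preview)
Your proposal is correct and follows essentially the same route as the paper: reduce to reaching the canonical word $y_P$ via repeated use of \cref{lemma:Lyndon_oneshift_smaller} on the current necklace, using bijectivity to realize inverses, and verify the lemma's hypothesis $\hat{x}[i]\neq\hat{x}[n]$ separately in the binary and all-distinct cases. Your justifications for $\hat{x}[n]=\texttt{b}$ and $i<e_a$ in the binary case, and $i\geq 1$, $i<n$ in the all-distinct case, are slightly more explicit than the paper's but amount to the same argument.
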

\begin{proof}
    Given any word, consider its smallest rotation $x$, and let $y$ be the smallest word with the same Parikh vector.
    Since \BBWT{} and rotations are bijections,
    it is easy to see that $\BBWT^{-1}(x)$ (resp. $\rot^{-1}(x)$) can be represented
    by a sequence of $\BBWT(x)$ (resp. $\rot(x)$) operations.
    Therefore, it suffices to show that we can reach $y$ from $x$ using any of these operations.
    If $y \prec x$,
    using Lemma~\ref{lemma:Lyndon_oneshift_smaller}, we can always obtain a strictly lexicographically smaller string using rotations and $\BBWT^{-1}$ and thus eventually reach $y$:
    when all symbols are distinct, it is easy to see that the condition of Lemma~\ref{lemma:Lyndon_oneshift_smaller} holds.
    If the alphabet is binary, i.e., $\{\texttt{a},\texttt{b}\}$,
    we have that $x[n] = \texttt{b}$ since $x$ is a smallest rotation.
    Furthermore,
    if $i = lcp(x,y)$, then, since $x[i]= \texttt{b}$ would imply $x = y$,
    we have $x[i] = \texttt{a} \neq \texttt{b} = x[n]$.
    \qed
\end{proof}

\section*{Acknowledgments}
This work was supported by JSPS KAKENHI Grant Numbers JP24K02899 (HB) and JP23H04378 (DK).

\clearpage
\bibliographystyle{splncs04}
\bibliography{refs,literature}

\end{document}